\newtheorem{theorem}{Theorem}[section]
\newtheorem{lemma}[theorem]{Lemma}
\newtheorem{proposition}[theorem]{Proposition}
\newtheorem{definition}[theorem]{Definition}
\newtheorem{remark}[theorem]{Remark}
\newtheorem{example}[theorem]{Example}
\newcommand{\B}[1]{\fbox{#1}\;}
\newcommand{\pg}{\left\{ }
\newcommand{\pd}{\right\} }
\begin{document}
\title{Boltzmann Samplers for Colored Combinatorial Objects}
\author{O. Bodini, A. Jacquot}
\maketitle
\begin{center}
\address{
LIP6, UMR 7606, Departement CALSCI,
Universit\'e Paris 6 - UPMC,\\
104, avenue du pr\'esident Kennedy,\\ 
F-75252 Paris cedex 05, France\\
}

\email{\{Olivier.Bodini, Alice.Jacquot\}@lip6.fr}
\end{center}
\begin{abstract}
In this paper, we give a general framework for the Boltzmann generation of colored objects belonging to combinatorial constructible classes. We propose an intuitive notion called \textit{profiled objects} which allows the sampling of size-colored objects (and also of $k$-colored objects) although the corresponding class cannot be described by an analytic ordinary generating function. 
\end{abstract}


Efficient generation of extremely large objects is needed in many situations.  For instance in statistical physics for observing limit behaviors \cite{BFP},  in biology  for understanding  and analysing genome properties \cite{PTD}, or in computer sciences for testing programs \cite{DGGLP} or  simulating and modelizing Network as Internet \cite{DS,BDS}. In 2003, Duchon, Flajolet, Louchard and Schaeffer \cite{DFLS}  proposed a new model, called Boltzmann model, which leads to  systematically construct samplers for random generation of objects in combinatorial constructible classes. 
These samplers depends on
 a real parameter $x$ and generate an object $a$ in a combinatorial constructible class $\mathcal{A}$ with a probability essentially proportional to $x^{|a|}$ where $|a|$ is the size of $a$. Hence they draw uniformely in the class $\mathcal{A}_n$ of all the objects of size $n$ in $\mathcal{A}$. The size of the output is a random variable, and parameter $x$ can be tuned for a targetted  mean value. Moreover using rejection, one can obtain exact size samplers or approximate size samplers. This new approach differs from the 
 ''recursive method'' introduced by Nijenhuis and Wilf \cite{NW} by giving the possibility of relaxing the constraint of an exact size for the output and this implies a significant gain in complexity: no preprocessing phase is needed and expected time complexity is linear in the size of the output.
 
Boltzmann model has been described both in the cases of unlabelled combinatorial constructible (also called specifiable or decomposable) classes \cite{DFLS} and labelled combinatorial constructible classes \cite{DFLS,FFP}, for the most classical constructions (+, $\times$, Seq, Set, Cyc,...). 
In this paper, we are interested in the generation under Boltzmann model of size-colored combinatorial objects. Let $a$ be a combinatorial object with $n$ atoms, we say that this object is \emph{size-colored or colored} if each of its atoms can be colored with a color in $\{1,...,n\}$. Our main motivation for this study stems from the following situation : Consider that the construction of an object of size $n$ is distributed to $n$ heterogenous processors. Each processors signs the atoms that it has build. A size-colored object is exactly a possible way to build such an object and we would like to highlight some properties about it.  
In order to do that, we are going to deal with well-known $k$-colored combinatorial object (in this case, each atom can be colored with a color in $\{1,...,k\}$, and $k$ does not depend on the size). $k$-colored classes can be found in numerous problems as the $k$-colored necklaces \cite{FM}, expression trees with $k$ types of $n$-ary functions, $k$-colored planar trees, $k$-colored Motzkin paths \cite{ST},...

\newpage

For instance, all the 2-colored non-planar general trees of size 3 are :

{\small
\begin{tikzpicture}[baseline=(2),level distance=5mm,sibling distance=4mm]
\node (1) {\scriptsize{1}}
    child {
	node (2) {\scriptsize{1}}
	child {
	node {\scriptsize{1}}}
    }
;
\end{tikzpicture}
\begin{tikzpicture}[baseline=(2),level distance=5mm,sibling distance=4mm]
\node (1) {\scriptsize{1}}
    child {
	node (2) {\scriptsize{1}}
	child {
	node {\scriptsize{2}}}
    }
;
\end{tikzpicture}
\begin{tikzpicture}[baseline=(2),level distance=5mm,sibling distance=4mm]
\node (1) {\scriptsize{1}}
    child {
	node (2) {\scriptsize{2}}
	child {
	node {\scriptsize{1}}}
    }
;
\end{tikzpicture}
\begin{tikzpicture}[baseline=(2),level distance=5mm,sibling distance=4mm]
\node (1) {\scriptsize{1}}
    child {
	node (2) {\scriptsize{2}}
	child {
	node {\scriptsize{2}}}
    }
;
\end{tikzpicture}
\begin{tikzpicture}[baseline=(2),level distance=5mm,sibling distance=4mm]
\node (1) {\scriptsize{2}}
    child {
	node (2) {\scriptsize{1}}
	child {
	node {\scriptsize{1}}}
    }
;
\end{tikzpicture}
\begin{tikzpicture}[baseline=(2),level distance=5mm,sibling distance=4mm]
\node (1) {\scriptsize{2}}
    child {
	node (2) {\scriptsize{1}}
	child {
	node {\scriptsize{2}}}
    }
;
\end{tikzpicture}
\begin{tikzpicture}[baseline=(2),level distance=5mm,sibling distance=4mm]
\node (1) {\scriptsize{2}}
    child {
	node (2) {\scriptsize{2}}
	child {
	node {\scriptsize{1}}}
    }
;
\end{tikzpicture}
\begin{tikzpicture}[baseline=(2),level distance=5mm,sibling distance=4mm]
\node (1) {\scriptsize{2}}
    child {
	node (2) {\scriptsize{2}}
	child {
	node {\scriptsize{2}}}
    }
;
\end{tikzpicture}
\begin{tikzpicture}[baseline=(2),level distance=5mm,sibling distance=4mm]
\node (1) {\scriptsize{1}}
    child {
	node {\scriptsize{1}}
}
	child {
	node {\scriptsize{1}}}
    
;
\end{tikzpicture}
\begin{tikzpicture}[baseline=(2),level distance=5mm,sibling distance=4mm]
\node (1) {\scriptsize{1}}
    child {
	node {\scriptsize{1}}
}
	child {
	node {\scriptsize{2}}}
    
;
\end{tikzpicture}
($\equiv$
\begin{tikzpicture}[baseline=(2),level distance=5mm,sibling distance=4mm]
\node (1) {\scriptsize{1}}
    child {
	node {\scriptsize{2}}
}
	child {
	node {\scriptsize{1}}}
    
;
\end{tikzpicture}
)
\begin{tikzpicture}[baseline=(2),level distance=5mm,sibling distance=4mm]
\node (1) {\scriptsize{1}}
    child {
	node {\scriptsize{2}}
}
	child {
	node {\scriptsize{2}}}
    
;
\end{tikzpicture}
\begin{tikzpicture}[baseline=(2),level distance=5mm,sibling distance=4mm]
\node (1) {\scriptsize{2}}
    child {
	node {\scriptsize{1}}
}
	child {
	node {\scriptsize{1}}}
    
;
\end{tikzpicture}
\begin{tikzpicture}[baseline=(2),level distance=5mm,sibling distance=4mm]
\node (1) {\scriptsize{2}}
    child {
	node {\scriptsize{1}}
}
	child {
	node {\scriptsize{2}}}
    
;
\end{tikzpicture}
($\equiv$
\begin{tikzpicture}[baseline=(2),level distance=5mm,sibling distance=4mm]
\node (1) {\scriptsize{2}}
    child {
	node {\scriptsize{2}}
}
	child {
	node {\scriptsize{1}}}
    
;
\end{tikzpicture}
)
\begin{tikzpicture}[baseline=(2),level distance=5mm,sibling distance=4mm]
\node (1) {\scriptsize{2}}
    child {
	node {\scriptsize{2}}
}
	child {
	node {\scriptsize{2}}}
    
;
\end{tikzpicture}
}

Each object has to be drawn with the same probability $\frac{1}{14}$. This example points out that symmetries increase the difficulty to build an uniform sampler. That can be also observed by comparing Boltzmann sampler codes for labelled and unlabelled structures. In a naive approach, we could deal with in the classical Boltzmann model by duplicating $k$ times atoms, but we give here a more general and efficient point of view. In particular, we obtain a relatively compressed storage that we explain in a later section and we can postpone the draw of the colors to avoid the unnecessary coloration during the rejection phase.

A \textit{profiled  object} is by definition an object associated with a set-partition of its atoms. This concept is derived from Polya theory \cite{BLL}. We prove that this structure is lifted to all $k$-colorations. Roughly speeking, the atoms belonging the same partition-class could be interpreted as having the same colors. More precisely, for every $k$, a sampler for profiled objects can be specialised (by a good choice of the parameters) to become a sampler for any $k$-colored objects, so that, in a sense, our model unifies all $k$-colorations.
Espacially, profiled object samplers can be adapted to efficient sampling of colored object in approximative size. Profiles enable us to generate colored objects with Boltzmann methods even if the ordinary generating function is not analytic in 0 (which is an obstruction in classical Boltzmann theory). For instance, let $\mathcal{A}$ be the unlabelled constructible class Seq($\mathcal{Z}$), it is well known that the ordinary generating function is $1/(1-z)$. In this case, the ordinary generating function for the class $\tilde{\mathcal{A}}$ of colored objects is $\sum\limits_{k\geq 0}{n^nz^n}$ which is clearly not analytic in $0$. At first sight, it seems impossible to have a Boltzmann-type sampler for this class, since the parameter $x$ has to be in the disk of convergence of the generating function.


\smallskip
This paper is organised in four sections.
The first section is devoted to the definitions of various taggings for combinatorial classes. We define a general framework, in terms of language,  that allows to describe  unlabelled and labelled classes as well as less studied classes such as semilabelled or colored classes. We also recall the notion of constructible classes which is central for Boltzmann sampler theory.
The second section adresses the notion of profiled objects. We define the generating function associated to a profiled combinatorial class and express its relation with the generating functions of the associated $k$-colored classes. 

In the third section, we construct Boltzmann samplers for the profiled combinatorial constructible classes and we show how to use it to obtain samplers for the $k$-colored combinatorial constructible classes.

In the last section, we propose an approximate size sampler for the colored combinatorial constructible classes. This sampler is based on a filter that allows to transform a sampler for $n$-colored objects into a sampler for colored objects.
\section{Combinatorial Classes}
Combinatorial classes are very well studied and classical objects \cite{FS}. We propose in this section a general framework, in terms of language, that allows to describe extended tagged combinatorial classes. Our approach can be seen as an introduction to a very simplified species theory \cite{BLL}.  Let $\mathfrak{A}$ be the alphabet on the 7 following letters $\square \{ \} [ ] ( )$. The square is called the \textit{atom}. The other letters correspond to three different types of parentheses.

\newpage

We denote by $\mathcal{L}$ the language on $\mathfrak{A}$ defined as follows~:
\begin{itemize}

\item $\square, \{\}, (), []$ belong to $\mathcal{L}$
\item $\forall k\in \mathbb{N}^*, A_1,...,A_k\in \mathcal{L} \Rightarrow (A_1A_2...A_k)\in \mathcal{L}$, the \textit{sequences}
\item $\forall k\in \mathbb{N}^*, A_1,...,A_k\in \mathcal{L} \Rightarrow [A_1A_2...A_k]\in \mathcal{L}$, the \textit{cyclic sequences}
\item $\forall k\in \mathbb{N}^*, A_1,...,A_k\in \mathcal{L} \Rightarrow \{A_1A_2...A_k\}\in \mathcal{L}$, the \textit{multisets}
\end{itemize}

The elements of $\mathcal{L}$ are called the \textit{combinatorial proto-objects}.

The \emph{size} of a proto-object $A$ is the number of atoms contained in $A$. That is to say, the number of occurence of $\square$ in $A$. 
For instance, $(\{[\square\square](\square\square)\}\square)$ is a combinatorial proto-object of size 5.

Let $A$ be a proto-object, we denote $atom(A)$ the set of the occurences of $\square$ in $A$. In particular, $|atom(A)|$ is the size of $A$ and each element of $atom(A)$ corresponds exactly to one occurence of $\square$ in $A$.

%
%
At this stage, we just have a support but we need to explain the meaning of the cycles and the multisets by some equivalences between proto-objects. In order to do that, let $B$ be a set, a \textit{$B$-colored proto-object} is a pair $\langle A,f\rangle$ where $A$ is a word of $\mathcal{L}$ and $f$ is a function from $atom(A)$ to $B$. Let $a$ be an atom of $A$, $f(a)$ is called the \textit{color} of $a$. We denote by $\mathcal{L}_B $ the set of all $B$-colored proto-objects.

We define the equivalence $\equiv_B$ between $B$-colored proto-objects of $\mathcal{L}_B $ as follows~:

\begin{itemize}
\item Two $B$-colored squares are equivalent if and only if they are the same color~: $$\langle\square,f\rangle\equiv_B\langle\square,f'\rangle \Leftrightarrow f(\square)=f'(\square)$$
\item Two sequences are equivalent if their corresponding substructures are equivalent :
\begin{eqnarray*}
&\forall A_1,...,A_k,B_1,...,B_k\in \mathcal{L} , \langle (A_1A_2...A_k),f \rangle \equiv_B \langle (B_1B_2...B_k),f'\rangle\\
&\Leftrightarrow \\ 
&\forall i, f(Atom(A_i))=f'(Atom(B_i)) \mbox{ and } \langle A_i,f|_{Atom(A_i)}\rangle \equiv_{B} \langle B_i,f'|_{Atom(B_i)}\rangle 
\end{eqnarray*}

\item Two cycles are equivalent if their substructures are equivalent up to a circular shift :
\begin{eqnarray*}
&\forall A_0,...,A_{k-1},B_0,...,B_{k-1}\in \mathcal{L},\langle[A_0A_1...A_{k-1}],f\rangle\equiv_B \langle[B_0...B_{k-1}],f'\rangle\\
&\Leftrightarrow \\ 
&\exists n\forall i, f(Atom(A_i))=f'(Atom(B_{i+n \bmod k}))\\
& \mbox{ and }\\ 
&\langle A_i,f|_{Atom(A_i)}\rangle\equiv_{B} \langle B_{i+n \bmod k},f'|_{Atom(B_{i+n \bmod k})}\rangle
\end{eqnarray*}

\item Two sets are equivalent if their substructures are equivalent up to permutations (we denote by $S_k$ the symmetric group of order $k$) :
\begin{eqnarray*}
&\forall A_0,...,A_{k-1},B_0,...,B_{k-1}\in \mathcal{L},\langle\{A_0...A_{k-1}\},f\rangle\equiv_B \langle\{B_0...B_{k-1}\},f'\rangle\\
&\Leftrightarrow \\ 
&\exists \sigma\in S_k,\ \forall i, f(Atom(A_i))=f'(Atom(B_{\sigma(i)}))\\ 
&\mbox{ and }\\ 
&\langle A_i,f|_{Atom(A_i)}\rangle\equiv_{B} \langle B_{\sigma(i)},f'|_{Atom(B_{\sigma(i)})}\rangle
\end{eqnarray*}
\end{itemize}

The set $\mathcal{O}_B$ of \emph{$B$-colored combinatorial objects} is $\mathcal{L}_B /\equiv_B$. Now, we can define the fundamental notion of combinatorial classes as follows : 

\begin{definition}
A \textit{$B$-colored combinatorial class} is a sub-multiset of $\mathcal{L}_B/\equiv_B$ with a finite number of object of each size.  
\end{definition}

\newpage

We opt in this definition for a multiset version which avoid the traditionnal problem of the disjoint union. The union $A\cup B$ of two multisets is the multiset obtained with all the existing elements and where the multiplicity of $a$ in $A\cup B$ is the sum of the multplicity $a$ in $A$ and the multiplicity of $a$ in $B$.

The \emph{size} of a $B$-colored combinatorial object $A$ is the size of any proto-object which represents it. 

Every object of size 0 is called \textit{neutral object}. For instance, $\{()()[]\}$ is a neutral object. Clearly, $\mathcal{O}_B$ contains an infinite number of neutral objects. we carefully avoid the confusion between the empty combinatorial class $\emptyset$ and a combinatorial class reduced to a neutral element $\epsilon$.

For example, the following $\{1,2\}$-colored combinatorial class $\mathcal{C}$ contains 3 objects. The both objects of size 3 are identical. $$\mathcal{C}=\displaystyle \pg \left( \B{1}\right),\left \{ \left [\B{2}\B{1}  \right ]\B{1} \right\},\left \{\B{1}  \left [\B{1}\B{2}  \right ]\right\} \pd$$


%

Now, we can add some restrictions on $f$ to obtain different types of \emph{tagged} combinatorial objects. Some of them are classical (unlabelled, labelled), the other ones are more unusual but reasonable.

\begin{figure}[htbp]\small
\begin{center}
\begin{tabular}{|l|l|l|}
\hline
Function & Codomain & Labelling
\\
\hline
\hline
one-to-one map &  $\{1,...,n\}$ for an object of size $n$ & labelled 
\\
\hline
constant map & for instance $\{1\}$ & unlabelled 
\\
\hline
quasi-constant map & $\{0,1\}$ ($\exists!y\forall x\neq y f(x)=1$ and $f(y)=0$) & pointed 
\\
\hline
surjective map &$\{1,...,k\}$ with $k\leq n$ &  semi-labelled 
\\
\hline
arbitrary map &$\{1,...,n\}$ for an object of size $n$ &  colored 
\\
\hline
arbitrary map &$\{1,...,k\}$ with $k$ fixed &  $k$-colored 
\\
\hline
\end{tabular}
\end{center}
\caption{\label{tab:const} Different labellings.}
\end{figure}

For each of these tagged objects, we can easily define the notion of combinatorial classes.

We essentially study in this paper the notion of colored and $k$-colored classes. The semilabelled classes are the topic of another work.

Let $\mathcal{A}$ be a combinatorial (unlabelled) class, we denote by $\mathcal{A}_n$ the set of objects of size $n$.
\subsection{Constructible combinatorial classes}

A \textit{constructible combinatorial class} $\mathcal{A}(\mathcal{Z})$ is a class which can be build from  neutral classes and an atomic class $\mathcal{Z}$ by using some of these builders ($+$,$\times$,$Seq$,$MSet$,$PSet$,$Cyc$,...). (the builders $Seq$, $MSet$, $PSet$, $Cyc$ cannot be applied to classes having neutral objects.)
For instance, $MSet(\mathcal{Z}\times Seq(\mathcal{Z}))$. We also accept the recursive construction : $\mathcal{T}=f(\mathcal{T})$.\\
These constructions are exactly the ones that we recurrently find in modern theories of combinatorial analysis. We do not develop here the precise grammar of building. This can be found for instance in \cite{FS}.

A \textit{colored constructible combinatorial class} (resp. \textit{$k$-colored constructible combinatorial class}) is build from a  constructible combinatorial class $\mathcal{A}(\mathcal{Z})$ as follows~: $\mathcal{\tilde{A}}=\bigcup\limits^{\infty}_{n=0}\mathcal{A}_n(\mathcal{Z}_1+\ldots+\mathcal{Z}_n)$ (resp. $\mathcal{A}^{[k]}=\mathcal{A}(\mathcal{Z}_1+\ldots+\mathcal{Z}_k)$). The unique atom of the atomic class $\mathcal{Z}_k$ corresponds to an atom colored with the color $k$. Strictly speaking, $\mathcal{\tilde{A}}$ and $\mathcal{A}^{[k]}$  are not combinatorial tagged classes as defined in the previous section. But they can be obviously identify by expressing the colors in terms of function on the atoms.

\newpage

In the sequel, we denote identically the colored constructible combinatorial class and its associated unlabelled combinatorial class. For instance, we could speak about the colored constructible combinatorial class $MSet(\mathcal{Z}\times Seq(\mathcal{Z}))$ to say that we consider the colored constructible combinatorial class build from the unlabelled constructible class $MSet(\mathcal{Z}\times Seq(\mathcal{Z}))$.





\section{Profiled combinatorial classes}

Let $A$ be a set, a subset $S$ of $\mathcal{P}(A)$ (the set of all parts of $A$) is called a \textit{set-partition} of $A$ if and only if the elements of $S$ are disjoint, $S$ does not contain $\emptyset$ unless that $A=\emptyset$ and the union of elements of $S$ are equal to $A$. 

A \emph{profiled combinatorial object} $\langle a, \sigma_a\rangle$ is a combinatorial object $a$ equiped with a set-partition $\sigma_a$ of $atom(a)$ which represents its \emph{profile} or \emph{symmetry}. 

Remark : In the classical Polya theory \cite{BLL}, we deal with an automorphism $s$ of $atom(a)$ but this is overdimensioned according to what we need. We only want to mark the undistinguished atoms, a partition is clearly sufficient.

\begin{definition}
A \textit{profiled combinatorial class} is a multiset of profiled combinatorial objects with a finite number of object of each size.  
\end{definition}

Now, we can define the product and the diagonal of profiled combinatorial objects as follows :

\textbf{Product (concatenation).}
We define the symmetry $\sigma_m$ of $m=b\times c$ as $\sigma_b\cup \sigma_c$. In other words, $\langle b\times c, \sigma_b\cup \sigma_c\rangle$.

\textbf{Diagonal.}
We define $\delta_ka=a...a$ where $a$ is repeated $k$ times.

The symmetry of $\delta_ka$ is defined as follows : Let $\{P_1,...,P_m\}$ be the symmetry of $a$ then the symmetry of $\delta_ka$ is $\{\bigcup\limits_{i}{P_1^i},...,\bigcup\limits_{i}{P_m^i}\}$   where $P_1^i$ corresponds to $P_i$ but on the atoms of the $i$-th occurence of $a$. 
For instance, let $\langle a, \sigma_a\rangle=\langle(\{[\square\square](\square\square)\}\square),\{\{1,3\},\{2,5\},\{4\}\}\rangle$ (the first $\square$ is linked to the third $\square$...), then $\delta_2\langle a, \sigma_a\rangle$ is $\langle aa,\{\{1,3\}\cup\{6,8\},\{2,5\}\cup\{7,10\},\{4\}\cup\{9\}\}\rangle$ 




A coloration of a profiled object of size $n$ and symmetry $\sigma$  is \textit{consistant} if on each subset of $\sigma$ every atom has the same color. We also say that a profile is \textit{$c$-admissible} for a coloration $c$ if each part of $\sigma$ is monocolor.

\subsection{Constructible profiled combinatorial classes}
the constructible profiled combinatorial classes are exactly the classes that we can buld with the following list of operators :
\begin{itemize}
\item \textbf{Atomic class} : $\mathcal{Z}=\{\square\}$, the set-partition is reduce to the singleton.
\item \textbf{Union} : $\mathcal{A}\cup\mathcal{B}$ is the multiset-union of the classes. 
\item\textbf{Product} : $\mathcal{A}\times\mathcal{B}=\{\langle a\times b, \sigma_a\cup \sigma_b\rangle ; a\in \mathcal{A}, b\in\mathcal{B}\} $
\item \textbf{Sequence} : $Seq(\mathcal{A})=\bigcup\limits_{i=0}^\infty\mathcal{A}^i$
\item \textbf{Diagonal} : $\Delta_k(\mathcal{A})=\{\delta_ka; a\in \mathcal{A}\}$
\item \textbf{MultiSet} :\\ $MSet(\mathcal{A})=\left \{ \{B\}; B \mbox{ a concatenation of diagonals of elements in } \mathcal{A}  \right\}$
\item \textbf{Cycle} :\\ $Cyc(\mathcal{A})=\left \{ [B\,]; B \mbox{ a diagonal of concatened elements of } \mathcal{A}  \right\}$
\end{itemize}

A classical tool associated to combinatorial classes is the notion of generating functions. We explain below the building rules for the generating multivariate functions on the infinite many variables $s_1,s_2,...$ associated to constructible profiled combinatorial classes. The variable $s_i$ are going to express a set of $i$ linked atoms.

\newpage

For example, this profiled general non-planar tree of size $14$ :

\begin{center}
\begin{tikzpicture}

\node (1) {$\square$} 
  child {  
    node (2) {$\square$} 
    edge from parent 
    node[fill=white,-|,inner sep=-0.8pt] {$\delta_1$}
}
child 
child {
   node (4_1) {$\square$}
    edge from parent
      node[fill=white,-|,inner sep=-0.8pt] {$\delta_2$}
  }
child 
child {
    node (5_1) {$\square$}    
   child {
    node (6_1) {$\square$}
    }
   child{
     node (8_11) {$\square$}
     edge from parent 
     node[fill=white,-|,inner sep=-0.9pt] {$\delta_3$}
   }
   edge from parent 
   node[fill=white,-|,inner sep=-0.8pt] {$\delta_2$}
}
;

\end{tikzpicture}
\end{center}

has $s_1^2.s_2^3.s_6$ as associated monomial.

\bigskip

 In fact, 
the multivariate generating function associated to constructible profiled combinatorial classes can be defined inductively as follows :
\begin{itemize}
\item \textbf{Neutral class} : $F_\mathcal{E}=1$
\item \textbf{Atomic class} : $F_\mathcal{Z}=s_1$
\item \textbf{Sum} : $F_{\mathcal{A}+\mathcal{B}}=F_\mathcal{A}+F_\mathcal{B}$
\item\textbf{Product} : $F_{\mathcal{A}\times\mathcal{B}}=F_\mathcal{A}\times F_\mathcal{B}$
\item \textbf{Sequence} : $F_{Seq(\mathcal{A})}=\dfrac{1}{1-F_\mathcal{A}}$
\item \textbf{Diagonal} : $F_{\Delta_k(\mathcal{A})}(s_1,s_2,...)=F_{\mathcal{A}}(s_k,s_{2k},...)$
\item \textbf{MultiSet} : $F_{MSet(\mathcal{A})}=\exp{\sum_{k\geq 1} \frac{1}{k}F_{\Delta_k(\mathcal{A})}}$
\item \textbf{Cycle} : $F_{Cyc(\mathcal{A})}=\sum_{k\geq 1} -\frac{\varphi(k)}{k} \ln(1-F_{\Delta_k(\mathcal{A})})$
\end{itemize}

Let us notice that $F_{\mathcal{A}}$ is not an enumerative generating function for the profiled objects in $\mathcal{A}$. But it has the following very interesting property. Let us put $f_{\mathcal{A}}(x,t)=F_{\mathcal{A}}(t.x,t.x^2,...,t.x^k,...).$ It is a classical result of Polya theory that $[x^n]f_{\mathcal{A}}(x,t)=|\mathcal{A}_n(\mathcal{Z}_1+\ldots+\mathcal{Z}_t)|.$ The subtitution of $s_i$ by $t.x^i$ only says that the set of $i$ linked atoms represented by $s_i$ can exactly be colored in $t$ different ways and it must be considered as an object of size $x^i$. So, the function $f_{\mathcal{A}}(x,t)$ is the ordinary generating function for the $t$-colored combinatorial class $\mathcal{A}^{[t]}$. In a sense, the generating function of $F_{\mathcal{A}}$ contains the informations of all the generating functions for the $t$-colored combinatorial classes. In fact, this multivariate generating function is generally called cycle index sum and it is also fundamental in \cite{BFKV}. For instance, consider $\mathcal{S}=Mset(\mathcal{Z})$, the first terms of the profiled generating function are $F_{MSet(\mathcal{Z})}=1+s_1+s_1^2/2+s_2/2+s_1^3/6+s_1s_2/2+s_3/3+...$. To obtain the enumerative generating function for the 3-colored $Mset(\mathcal{Z})$, it suffices to replace $s_i$ by $3x^i$. So, we find $F_{MSet(\mathcal{Z})}=1+3x+6x^2+10x^3+...$

\smallskip
Remark : In constructive profiled classes, every symetries are only obtained by the diagonal operator. So, in the implementation, we do not need to stock the partition, we simply keep the diagonals unexpanded. That's why our profiled samplers return only an ``object'' $a$, instead of a couple $\langle a, \sigma \rangle$ as in our proofs. Moreover, this choice induces a storage gain, as big as the object has symmetries.



\section{Probability and Boltzmann Samplers}
The principle of Boltzmann samplers can be describe as follows : Let $\mathcal A$ be a combinatorial unlabelled class and $A(x)$ its ordinary generating function. Consider a non negative real number $x$ in the convergence disk of $A(x)$, a Boltzmann sampler $\Gamma\mathcal A$ is a random generator that draws each object $a\in\mathcal A$ with probability $\mathbb{P}(a)=\frac{x^{|a|}}{A(x)}.$ There are simple rules, described in \cite{DFLS} to build automatically Boltzmann samplers from the combinatorial specification of $\mathcal A$. 

For our purpose, we want to draw $t$-colored objects coming from a  specified class $\mathcal{A}(\mathcal{Z})$ with the following Boltzmann probability $\mathbb{P}_{t,x}(a)=\dfrac{x^{|a|}}{f_{\mathcal{A}}(x,t)}$. 

We produce samplers for (non colored) profiled objects with the following Boltzmann distribution : the probability to obtain a profiled object  $\langle a,\sigma\rangle$ such that $\sigma$ has $n_i$ parts of size $i$ is  $\dfrac{[s_1^{n_1}s_2^{n_2}...]F_{\mathcal{A}}(s_{(n)\in\mathbb{N}^*}).s_1^{n_1}s_2^{n_2}...}{F_{\mathcal{A}}(s_{(n)\in\mathbb{N}^*})}$ 
. We then prove that after a consistant coloration and a good choice of the parameters, we also obtain a Boltzmann sampler for the $t$-colored objects. The notion of powersets is not very relevant on profiled combinatorial classes. Indeed, in a profiled object, some atoms are undistinguished, but there is no information on distinguished atoms, they can be the similar or not. Below, a tabular with the classical distributions of probability and the design rules for basic sampler constructions where $\Gamma\mathcal A$ designs a Boltzmann sampler for the class $\mathcal A$.

\begin{figure}[htbp]\small
\begin{center}
\begin{tabular*}{\linewidth}{lll}
\hline
Distribution & Notation & Definition\\
\hline
\hline
Bernoulli &   Bern$(p)$ & $\mathbb{P}(0) = 1-p$ and $\mathbb{P}(1) = p$ (with $0 \leq p \leq 1$)
\\
Geometric & Geom($\lambda$) & $\mathbb{P}(k) = \lambda^k(1 - \lambda)$ (with $k\in \mathbb{N}$ and $0 \leq \lambda < 1$)
\\
Poisson & Pois($\lambda$) &  $\mathbb{P}(k) =e^{-\lambda}\dfrac{\lambda^k}{k!}$ (with $k\in \mathbb{N}$ and $\lambda\in \mathbb{R^{+*}}$)
\\
Positive Poisson & Pois$_{\geq0}$($\lambda$) &  $\mathbb{P}(k) =\dfrac{1}{e^{\lambda}-1}\dfrac{\lambda^k}{k!}$ (with $k\in \mathbb{N^*}$ and $\lambda\in \mathbb{R^{+*}}$)
\\

\hline
\end{tabular*}
\end{center}
\caption{\label{tab:const} Distributions of use in Boltzmann sampling.}
\end{figure}

In the sequel, the symbol $\leftarrow$ designs the affection.
\begin{itemize}
\item \textbf{Neutral class} : $\Gamma \mathcal{E}_{(s_1,s_2,...)}:=$ return $()$
\item \textbf{Atomic class} : $\Gamma \mathcal{Z}_{(s_1,s_2,...)}:=$ return $\square$
\item \textbf{Sum} : $\Gamma (\mathcal{A+B})_{(s_1,s_2,...)}:=$\\
Draw $X$ following a Bernoulli law of parameter $Bern(\dfrac{F_{\mathcal{A}}{(s_1,s_2,...)}}{F_{\mathcal{A}+\mathcal{B}}{(s_1,s_2,...)}})$\\
If $X=1$ return $\Gamma \mathcal{A}_{(s_1,s_2,...)}$ else return $\Gamma \mathcal{B}_{(s_1,s_2,...)}$
\item \textbf{Product} : $\Gamma (\mathcal{A \times B})_{(s_1,s_2,...)}:=$\\
Let $A\leftarrow \Gamma \mathcal{A}_{(s_1,s_2,...)}$ and $B \leftarrow \Gamma \mathcal{B}_{(s_1,s_2,...)}$ \\
return $(AB)$
\item \textbf{Diagonal} : $\Gamma {\Delta_k(\mathcal{A})}_{(s_1,s_2,...)}:=$\\
let $A \leftarrow\Gamma{\mathcal{A}}_{(s_k,s_{2k},...)}$\\ 
return $ \delta_kA$.
\item \textbf{Sequence} : See Algorithm 1.
\end{itemize}






\begin{itemize}
\item \textbf{MultiSet} : See Algorithm 2.
\end{itemize}
\begin{itemize}
\item \textbf{Cycle} : See Algorithm 3. 
\end{itemize}
\begin{proposition}
 The previous samplers are valid Botzmann samplers for profiled objects.
\end{proposition}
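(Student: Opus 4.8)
## Proof Proposal

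The plan is to proceed by structural induction on the specification of the profiled class $\mathcal{A}$, matching the recursive definition of the samplers against the inductive definition of the multivariate generating function $F_{\mathcal{A}}(s_1,s_2,\ldots)$. The target is to show that each sampler $\Gamma\mathcal{A}$ draws a profiled object $\langle a,\sigma\rangle$ whose profile has $n_i$ parts of size $i$ with probability $\dfrac{[s_1^{n_1}s_2^{n_2}\cdots]F_{\mathcal{A}}\cdot s_1^{n_1}s_2^{n_2}\cdots}{F_{\mathcal{A}}(s_1,s_2,\ldots)}$. The key observation making this a clean induction is that the generating-function rules and the sampler rules are defined in exact lockstep, construction by construction, so each inductive step reduces to verifying that one sampling primitive realizes one functional operation.

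First I would dispatch the base cases: the neutral class returns $()$ with probability $1$, matching $F_{\mathcal{E}}=1$, and the atomic class returns $\square$ with the singleton profile, contributing the monomial $s_1$, matching $F_{\mathcal{Z}}=s_1$. For the \textbf{Sum}, the Bernoulli parameter $F_{\mathcal{A}}/F_{\mathcal{A}+\mathcal{B}}$ splits the probability mass so that an object drawn from $\mathcal{A}$ receives weight $\frac{F_{\mathcal{A}}}{F_{\mathcal{A}+\mathcal{B}}}\cdot\frac{\text{(its monomial)}}{F_{\mathcal{A}}}$, and the $F_{\mathcal{A}}$ cancels; this is where I would record that the additive rule $F_{\mathcal{A}+\mathcal{B}}=F_{\mathcal{A}}+F_{\mathcal{B}}$ makes the normalizations agree. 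For the \textbf{Product}, independence of the two recursive calls multiplies the two monomial weights, and since the profile of $a\times b$ is $\sigma_a\cup\sigma_b$ the exponent vectors add, so the product monomial and the product $F_{\mathcal{A}}\cdot F_{\mathcal{B}}$ in the denominator are exactly what $F_{\mathcal{A}\times\mathcal{B}}=F_{\mathcal{A}}\times F_{\mathcal{B}}$ predicts.

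The \textbf{Diagonal} is the conceptually novel step and deserves care. The sampler calls $\Gamma\mathcal{A}$ with the variables shifted, $s_i\mapsto s_{ik}$, and then forms $\delta_k A$. I would check that this substitution is precisely what turns a profile of $A$ with $m_j$ parts of size $j$ into a profile of $\delta_k A$ with $m_j$ parts of size $jk$ (each part is fused across the $k$ copies, multiplying its size by $k$), so that the monomial $\prod_j s_j^{m_j}$ for $A$ becomes $\prod_j s_{jk}^{m_j}$ for $\delta_k A$; this matches $F_{\Delta_k(\mathcal{A})}(s_1,s_2,\ldots)=F_{\mathcal{A}}(s_k,s_{2k},\ldots)$ term by term. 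The \textbf{Sequence}, \textbf{MultiSet}, and \textbf{Cycle} cases then reduce to verifying that Algorithms 1--3 implement, respectively, the geometric expansion $\frac{1}{1-F_{\mathcal{A}}}$, the exponential formula $\exp\sum_{k\ge1}\frac1k F_{\Delta_k(\mathcal{A})}$, and the cycle index $\sum_{k\ge1}-\frac{\varphi(k)}{k}\ln(1-F_{\Delta_k(\mathcal{A})})$; in each the $\Delta_k$ components are sampled using the already-verified diagonal rule and assembled according to the appropriate distribution (geometric for $\mathrm{Seq}$, a Poisson-type mixture for $\mathrm{MSet}$, and a $\varphi$-weighted variant for $\mathrm{Cyc}$).

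I expect the main obstacle to be the \textbf{MultiSet} and \textbf{Cycle} steps, since there the generating-function operation is an infinite sum over all $k$ of diagonal terms, and the correctness of Algorithms 2 and 3 hinges on showing that the joint distribution of the number of parts of each multiplicity $k$ and the contents drawn in each part reproduces the coefficient extraction from $\exp$ and $\ln$ exactly — this is the profiled analogue of the classical $\mathrm{Set}$/$\mathrm{Cyc}$ Boltzmann samplers, but the diagonal bookkeeping means one must track which $s_{ik}$ variables are activated by which $\Delta_k$ component and confirm that the Poisson/$\varphi$-weighted draws assign the right weight $\frac1k$ or $\frac{\varphi(k)}{k}$ to each. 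Once the diagonal substitution lemma is pinned down, these reduce to the standard exponential/logarithmic formula arguments applied coordinatewise in the $s_i$, and the induction closes.
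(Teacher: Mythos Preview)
Your proposal is correct and follows essentially the same route as the paper: a case analysis over the constructors, with the base and polynomial cases dismissed as routine and the substance concentrated in $\mathrm{MSet}$ and $\mathrm{Cyc}$, where one must match the algorithm's output distribution against the coefficients of $\exp\sum_{k\ge 1}\frac{1}{k}F_{\Delta_k(\mathcal{A})}$ and $\sum_{k\ge 1}-\frac{\varphi(k)}{k}\ln(1-F_{\Delta_k(\mathcal{A})})$. The paper in fact only writes out $\mathrm{MSet}$ and $\mathrm{Cyc}$ explicitly (declaring the rest ``easy exercises''), and its $\mathrm{MSet}$ computation is a direct expansion---splitting on the drawn value of $K$, multiplying the Poisson and positive-Poisson weights, and then swapping sums and products via $\prod_j\sum_\alpha a_{\alpha,j}=\sum_{\boldsymbol{\alpha}}\prod_j a_{\alpha_j,j}$---so your phrase ``reduce to the standard exponential/logarithmic formula arguments'' is accurate in spirit but hides the one nonstandard wrinkle you would have to unwind, namely the max-index variable $K$ in Algorithm~2 that replaces an infinite product of independent Poissons by a finite one with a positive Poisson in the top slot.
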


\begin{algorithm}[H]
\caption{$\Gamma Seq(\mathcal{A})_{(s_1,s_2,...)}$}

\KwIn{the parameters $s_1,s_2...$}

\KwOut{a sequence.}

Draw $k$ following the geometric law $Geom(F_{\mathcal{A}}{(s_1,s_2,...)}).$\\

\For{$i$ \emph{\textbf{from }} $1$ \textbf{\emph{to }} $k$}{
$A_i\leftarrow \Gamma \mathcal{A}_{(s_1,s_2,...)}$\\
}
\Return $(A_1...A_k)$.
\end{algorithm}

\smallskip

\begin{algorithm}[H]
\caption{$\Gamma MSet(\mathcal{A})_{(s_1,s_2,...)}$}

\KwIn{the parameters $s_1,s_2...$}

\KwOut{a multiset.}

let $K$ be a random variable in $\mathbb{N}^*$ verifying $\mathbb{P}(K\leq k)=\prod\limits_{j>k}\exp(\frac{1}{j}F_{\mathcal{A}}(s_j,s_{2j},...))$\\
Draw $k$ following the law of $K$.\\
$S\leftarrow \epsilon$\\
\eIf{$k=1$}{
Draw $q$ following a Poisson law  of parameter $F_{\mathcal{A}}{(s_1,s_{2},...)}$.\\
\For{$i$ \emph{\textbf{from }} $1$ \textbf{\emph{to }} $q$}{
$A_i\leftarrow \Gamma\mathcal{A}_{(s_1,s_2,...)}$\\
$S\leftarrow Concat(S, A_i)$.\\
}
\Return $\{S\}$\\}
{\For{$j$ \emph{\textbf{from }} $1$ \textbf{\emph{to }} $k-1$}{
Draw $q$ following a Poisson law  of parameter $\frac{1}{j}F_{\mathcal{A}}{(s_j,s_{2j},...)}$.\\
\For{$i$ \emph{\textbf{from }} $1$ \textbf{\emph{to }} $q$}{
$A_i\leftarrow \Gamma\Delta_j(\mathcal{A})_{(s_1,s_2,...)}$\\
$S\leftarrow Concat(S, A_i)$.\\
}}
Draw $q$ following a Poisson$_{\geq 1}$ law  of parameter $\frac{1}{k}F_{\mathcal{A}}{(s_k,s_{2k},...)}$.\\
\For{$i$ \emph{\textbf{from }} $1$ \textbf{\emph{to }} $q$}{
$A_i\leftarrow \Gamma\Delta_k(\mathcal{A})_{(s_1,s_2,...)}$\\
$S\leftarrow Concat(S, A_i)$.\\
}
\Return $\{S\}$\\}
\end{algorithm}









\smallskip

\begin{algorithm}[H]
\caption{$\Gamma Cyc(\mathcal{A})_{(s_1,s_2,...)}$}

\KwIn{the parameters $s_1,s_2...$}

\KwOut{a cycle.}

let $K$ be a random variable in $\mathbb{N}^*$ verifying $\mathbb{P}(K= k)=- \frac{1}{F_{Cyc(\mathcal{A})}} \frac {\varphi(k)}{k} \  \ln(1-F_{\Delta_k(\mathcal{A})}) $\\
Draw $k$ following the law of $K$.\\

let $L$ be a random variable in $\mathbb{N}^*$ verifying $\mathbb{P}(L= l)=-\frac{( F_{\Delta_k(\mathcal{A})})^{l}}{l} \frac{1}{\ln(1-F_{\Delta_k(\mathcal{A})})} $\\
Draw $l$ following the law of $L$.\\

$M\leftarrow\epsilon$\\

\For{$i$ \emph{\textbf{from }} $1$ \textbf{\emph{to }} $l$}{
$A_i\leftarrow \Gamma\mathcal{A}_{(s_k,s_{2k},...)}$\\
$M\leftarrow Concat(M, A_i)$.\\
}
\Return $[\delta_kM]$.
\end{algorithm}

\begin{proof}[Proof of the proposition 3.1]
We only prove that for the sampler for MSet and Cyc. The other ones are easy exercices. Let us begin with the MSet sampler.
 First, suppose that during the execution of the algorithm, we have drawn $k=1$. So, $\mathbb{P}(K=1)=\exp(F_{\mathcal{A}}(s_1,s_2,...))$. After that, we have choosen $q$ with probability : $$\mathbb{P}(q)=\exp(-F_{\mathcal{A}}(s_1,s_2,...)).\dfrac{F_{\mathcal{A}}(s_1,s_2,...)^q}{q!}$$ and we have drawn $q$ objects in $\mathcal{A}$, each of them has profile $\sigma$ such that the number of parts of $\sigma \mbox{ of cardinal } i \mbox{ is } n_i$ with probability :
 $$\mathbb{P}_\mathbf{n}=\dfrac{[\mathbf{s^n}]F_{\mathcal{A}}(s_{(n)\in\mathbb{N}^*}).\mathbf{s^n}}{F_{\mathcal{A}}(s_{(n)\in\mathbb{N}^*})}.$$ where we denote $s_1^{n_1}s_2^{n_2}...$ by $\mathbf{s^n}$. So, the probability to obtain an object of profile $\sigma$ such that the number of parts of $\sigma \mbox{ of cardinal } i \mbox{ is } n_i$ (when we draw $k=1$) is the product of all these terms.
 This can be simplify in : $$\dfrac{
\mathbf{s^n}}{\prod\limits_{j=1}^{\infty} \dfrac{1}{j}F_{\mathcal{A}}(s_j,s_{2j},...)}\sum\limits_{q=0}^{\infty}{\sum\limits_{(\mathbf{m_1},...,\mathbf{m_q});\sum\limits_{d}{\mathbf{m_d}}=\mathbf{n}}{\dfrac{\prod\limits_{d=1}^{q}{[\mathbf{s^{m_d}}]F_{\mathcal{A}}(s_{(n)\in\mathbb{N}^*})}}{q!}}}.$$

Now, for $k>1$,  in the same way, we can show that the probability is $$\sum\limits_{k>1}\sum\limits_{q_1> 0,...,q_k>0}{
\mathop{\sum\limits_{(\mathbf{m}_{1,1},...,\mathbf{m}_{k,q_k});}}_ {\sum\limits_{i=1}^{k}\sum\limits_{j=1}^{q_i}{\mathbf{m}_{i,j}}=\mathbf{n}}
 {\dfrac{1}{1^{q_1}q_1!...k^{q_k}q_k!}.\dfrac{\prod\limits_{p=1}^{k}{\prod\limits_{d=1}^{q_p}{ [\mathbf{s^{m_{p,d}}}]F_{\mathcal{A}}(s_{(n)\in\mathbb{N}^*}).\mathbf{s}^{\mathbf{m}_{p,d}}}}}{\prod\limits_{j=1}^{\infty}{ \dfrac{1}{j}F_{\mathcal{A}}(s_j,s_{2j},...) 
}}}}.$$

So, using this sampler, the probability to obtain an object of profile $\sigma$ such that the number of parts of $\sigma \mbox{ of cardinal } i \mbox{ is } n_i$ is
$$\sum\limits_{k\geq 1}\sum\limits_{q_1> 0,...,q_k>0}{
\mathop{\sum\limits_{(\mathbf{m}_{1,1},...,\mathbf{m}_{k,q_k});}}_{\sum\limits_{i=1}^{k}\sum\limits_{j=1}^{q_i}{\mathbf{m}_{i,j}}=\mathbf{n}} {\dfrac{1}{1^{q_1}q_1!...k^{q_k}q_k!}.\dfrac{
\mathbf{s}^{\mathbf{n}}\prod\limits_{p=1}^{k}{\prod\limits_{d=1}^{q_p}{[\mathbf{s^{m_{p,d}}}]F_{\mathcal{A}}(s_{(n)\in\mathbb{N}^*})}}} {\prod\limits_{j=1}^{\infty} \dfrac{1}{j}F_{\mathcal{A}}(s_j,s_{2j},...)}}}.$$
 Indeed, this is all the ways to obtain an object of such a profile. 

Now, 
$[\mathbf{s^{n}}]F_{MSet(\mathcal{A})}(s_1,s_2,...)=\sum\limits_{\sum\limits_{j}{n_j}=n}{\prod\limits_{j}{[\mathbf{s^{n_j}}]\exp(\frac{1}{j}F_{\mathcal{A}}(s_j,s_{2j},...))}}$. Moreover, $$[\mathbf{s^{n_j}}]\exp(\frac{1}{j}F_{\mathcal{A}}(s_j,s_{2j},...))=\sum\limits_{\alpha=0}^{\infty}{\frac{1}{\alpha!}\sum\limits_{\sum\limits_{p=1}^{\alpha}n_{j,p}=n_j}{\prod\limits_{p=1}^{\alpha}{[\mathbf{s}^{\mathbf{n}_{j,p}}]\dfrac{F_{\mathcal{A}}(s_j,s_{2j}...)}{j}}}}.$$ So, $$[\mathbf{s^{n}}]F_{MSet(\mathcal{A})}(s_1,s_2,...)=\sum\limits_{\sum\limits_{j=1}^{\infty}{n_j}=n}{\prod\limits_{j=1}^{\infty}{\sum\limits_{\alpha=0}^{\infty}{\frac{1}{\alpha!}\sum\limits_{\sum\limits_{p=1}^{\alpha}n_{j,p}=n_j}{\prod\limits_{p=1}^{\alpha}{[\mathbf{s}^{\mathbf{n}_{j,p}}]\dfrac{F_{\mathcal{A}}(s_j,s_{2j}...)}{j}}}}}}.$$ We have to change the order of the sums and products. 
We use two times that $\prod\limits_{j=1}^{\infty}\sum\limits_{\alpha\in A}{a_{\alpha,j}}=\sum\limits_{\mathbf{\alpha}\in A^\infty}{\prod\limits_{j=1}^{\infty } }{a_{\alpha_j,j}} $
 , and we obtain that the probability to draw an object of profile $\sigma$
is $\dfrac{\mathbf{s}^\mathbf{n}.[\mathbf{s}^\mathbf{n}]F_{MSet(\mathcal{A})}(s_1,s_2,...)}{F_{MSet(\mathcal{A})}(s_1,s_2,...)}.$ 

Now, we analyse the sampler for Cyc. 
Suppose that the algorithm returns an object of size $n$ and profile $\sigma$ such that the number of parts of $\sigma \mbox{ of cardinal } i \mbox{ is } n_i$. 
The probability $p_{\mathbf{n}}$ to draw such an object is 
$$\sum\limits_{k.l=n}\sum\limits_{(\mathbf{m}_1,...,\mathbf{m}_l);\sum\limits_{d=1}^l{\mathbf{m}_d}=\mathbf{n}}\mathbb{P}(K=k).\mathbb{P}(L=l).\prod_{i=1}^lp_{\mathbf{m}_l}$$
with $\mathbb{P}(K= k)=- \dfrac{1}{F_{Cyc(\mathcal{A})}(s_{(n)\in\mathbb{N}^*})}
\dfrac {\varphi(k)}{k} \  \ln(1-F_{\Delta_k(\mathcal{A})}) $, $$\mathbb{P}(L=
l)=-\dfrac{( F_{\Delta_k(\mathcal{A})}(s_{(n)\in\mathbb{N}^*}))^{l}}{l}
\dfrac{1}{\ln(1-F_{\Delta_k(\mathcal{A})}(s_{(n)\in\mathbb{N}^*}))} $$ and $\forall
i\in \{0...l\}$, $p_{\mathbf{m}_i} =
\dfrac{[\mathbf{s}^{\mathbf{m}_i}]F_{\Delta_k(\mathcal{A})}(s_{(n)\in\mathbb{N}^*}).\mathbf{s}^{\mathbf{m}_i}}{F_{\Delta_k(\mathcal{A})}(s_{(n)\in\mathbb{N}^*})}$\\
So, $$p_{\mathbf{n}}=\sum\limits_{k.l=n}\sum\limits_{(\mathbf{m}_1,...,\mathbf{m}_l);\sum\limits_{d=1}^l{\mathbf{m}_d}=\mathbf{n}}
\dfrac{1}{F_{Cyc(\mathcal{A})}(s_{(n)\in\mathbb{N}^*})} \dfrac{\varphi(k)}{k}
\dfrac{1}{l}
\prod_{i=1}^l[\mathbf{s}^{\mathbf{m}_i}]F_{\Delta_k(\mathcal{A})}(s_{(n)\in\mathbb{N}^*}).\mathbf{s}^{\mathbf{m}_i}$$ $$=
\dfrac{[\mathbf{s}^{\mathbf{n}}]F_{Cyc(\mathcal{A})}(s_{(n)\in\mathbb{N}^*}).\mathbf{s}^{\mathbf{n}}}{F_{Cyc(\mathcal{A})}(s_{(n)\in\mathbb{N}^*})}$$ 
\end{proof}

\section{Boltzmann Samplers for $t$-colored classes and colored classes}
\subsection{The $t$-colored Boltzmann Samplers}
In this section, we mention how to obtain from the sampler for profiled objects the samplers for the $t$-colored objects. 
\begin{proposition}
 Let us substitute $s_i$ by $t.x^i$ and generate a profiled object $\langle a,\sigma\rangle$ as previously. For each part $p$ of $\sigma$, draw uniformaly a color $c_p$ in $\{1,...,t\}$ and let us assign the atoms in $p$ with the color $c_p$.

This sampler is a valid Boltzmann sampler for the $t$-colored objects in the class $\mathcal{A}$.
\end{proposition}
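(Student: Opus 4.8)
The plan is to perform the substitution $s_i = t\,x^i$ everywhere in the profiled sampler, to compute the probability that the composite procedure (run the profiled sampler, then color each part of $\sigma$ uniformly) returns a fixed $t$-colored object $b$, and to verify that this probability equals $x^{|b|}/f_{\mathcal{A}}(x,t)$. Write $b=\langle a,c\rangle$, where $a$ is the underlying uncolored object and $c$ its coloration, and put $n=|b|=|a|$. By Proposition~3.1, after the substitution the profiled sampler returns a profiled object whose profile $\sigma$ has $n_i$ parts of cardinality $i$ with a weight carrying the monomial $\mathbf{s^n}=\prod_i (t\,x^i)^{n_i}=t^{N(\sigma)}x^{n}$, where $N(\sigma)=\sum_i n_i$ is the number of parts; moreover $F_{\mathcal{A}}$ becomes $f_{\mathcal{A}}(x,t)$. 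The coloring step draws the $N(\sigma)$ parts independently and uniformly among $t$ colors, so a given colored object $b$ is produced from $\langle a,\sigma\rangle$ with probability $R(\sigma,b)/t^{N(\sigma)}$, where $R(\sigma,b)$ is the number of part-colorings of $\langle a,\sigma\rangle$ whose result is $b$ (this is nonzero only when $\sigma$ is $c$-admissible). Summing over all profiled objects yields
\[
\mathbb{P}(b)=\sum_{\sigma}\ \mathbb{P}_{\mathrm{prof}}(\langle a,\sigma\rangle)\,\frac{R(\sigma,b)}{t^{N(\sigma)}},
\]
where $\mathbb{P}_{\mathrm{prof}}(\langle a,\sigma\rangle)$ carries the factor $t^{N(\sigma)}x^{n}/f_{\mathcal{A}}(x,t)$ read off from the monomial $\mathbf{s^n}$, times the cycle-index weight $w(\sigma)$ with which $F_{\mathcal{A}}$ accounts for $\langle a,\sigma\rangle$.

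The decisive feature is a cancellation: the factor $t^{N(\sigma)}$ coming from the monomial $\mathbf{s^n}$ cancels the factor $t^{-N(\sigma)}$ coming from the uniform coloring. After this cancellation the size-dependent and normalizing factors factor out, leaving
\[
\mathbb{P}(b)=\frac{x^{n}}{f_{\mathcal{A}}(x,t)}\sum_{\sigma} w(\sigma)\,R(\sigma,b).
\]
Thus the whole statement reduces to showing that the combinatorial sum $\sum_{\sigma} w(\sigma)\,R(\sigma,b)$ equals exactly $1$ for every colored object $b$, independently of its symmetry. The role of the $R(\sigma,b)$ factor is essential and is precisely where different colored objects of the same size compensate: an object such as $\langle a,c\rangle$ with many equal colors admits more admissible profiles but fewer matching part-colorings, whereas a rainbow coloring admits only the all-singleton profile but many matching colorings, and the weights $w(\sigma)$ balance the two (as one checks directly for $\mathrm{MSet}(\mathcal{Z})$ at size $3$).

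The heart of the argument, and the step I expect to be the main obstacle, is proving $\sum_{\sigma} w(\sigma)\,R(\sigma,b)=1$. This is exactly the per-orbit form of the Pólya--Burnside counting that underlies the identity $[x^n]f_{\mathcal{A}}(x,t)=|\mathcal{A}_n(\mathcal{Z}_1+\cdots+\mathcal{Z}_t)|$ recalled in Section~2: the weight $w(\sigma)$ is the cycle-index weight of an element of the symmetry group built up through the construction of $\mathcal{A}$, a $c$-admissible profile $\sigma$ corresponds to a group element whose cycle partition is monochromatic for $c$, i.e.\ an element fixing $b$, and $R(\sigma,b)$ records its fixed colorings representing $b$; Burnside's lemma applied to the transitive action on the orbit representing $b$ then makes these contributions sum to $1$. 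I would establish this identity by structural induction on the constructors ($+$, $\times$, $\mathrm{Seq}$, $\Delta_k$, $\mathrm{MSet}$, $\mathrm{Cyc}$ and the recursive case), checking at each step that the profile weights combine as the corresponding cycle index and that monochromaticity is preserved; the $\mathrm{MSet}$ and $\mathrm{Cyc}$ cases, where the diagonal operator manufactures the nontrivial parts, require the most care and run in direct parallel with the computations in the proof of Proposition~3.1. Granting $\sum_{\sigma} w(\sigma)\,R(\sigma,b)=1$ gives $\mathbb{P}(b)=x^{|b|}/f_{\mathcal{A}}(x,t)$, which is the asserted $t$-colored Boltzmann distribution; as a consistency check, summing over the $|\mathcal{A}_n(\mathcal{Z}_1+\cdots+\mathcal{Z}_t)|$ objects of size $n$ recovers the total mass $[x^n]f_{\mathcal{A}}(x,t)\cdot x^n/f_{\mathcal{A}}(x,t)$.
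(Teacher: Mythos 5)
Your proposal is correct and follows essentially the same route as the paper's own proof: both compute $\mathbb{P}(b)$ as a sum over $c$-admissible profiles, cancel the $t^{N(\sigma)}$ from the monomial against the $t^{-N(\sigma)}$ from the uniform coloring, factor out $x^{|b|}/f_{\mathcal{A}}(x,t)$, and reduce everything to the normalization identity that the weighted sum over admissible profiles equals $1$ --- an identity the paper merely asserts (``it can be proved that\ldots''), whereas you at least sketch a Burnside/structural-induction argument for it. The only content you miss is the paper's alternative one-line proof, namely that after the substitution $s_i = t\,x^i$ the profiled sampler is syntactically identical to a classical Boltzmann sampler for $t$-colored objects with the Bernoulli color choices postponed, which sidesteps the combinatorial identity entirely.
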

\begin{proof}
 Let us consider a $t$-colored object $a$ drawn by this sampler. Suppose that $a$ is of size $n$ and have $n_i$ atoms colored with the color $i$. The probability to obtain $a$ is the sum on the $a$-admissible profiled objects times the probability to put the good colors. So, we have :
$\mathbb{P}(a)= \sum\limits_{\sigma \mbox{\tiny{admissible profile}}}{\dfrac{\prod{(t.x^i)^{n_i}}}{F_{\mathcal{A}}(t.x,t.x^2,...)}.\dfrac{\kappa_\sigma.[\mathbf{s}^\mathbf{n}]F_{\mathcal{A}}(t.x,t.x^2,...)} {t^{\sum{n_i}}}}=\dfrac{x^{|a|}}{f_{\mathcal{A}}(x,t)}.\sum\limits_{\sigma \mbox{\tiny{admissible profile}}}{\kappa_\sigma.[\mathbf{s}^\mathbf{n}]F_{\mathcal{A}}(t.x,t.x^2,...)}$ where $\kappa_\sigma$ is the number of $\equiv_B$-equivalent coloration that corresponds to the color of $a$.\\ It can be proved that $\sum\limits_{\sigma \mbox{\tiny{admissible profile}}}{\kappa_\sigma.[\mathbf{s}^\mathbf{n}]F_{\mathcal{A}}(t.x,t.x^2,...)}=1$.

Another proof of this proposition is just to observe that after the substitution of $s_i$ by $t.x^i$, our sampler is syntaxically identical to a classical Boltzmann sampler for $k$-colored objects, except that we have postponed the Bernoulli choice of the color by the informations stored in the profile. 
\end{proof}

So, we have valid Boltzmann samplers for the $t$-colored objects. We can use it for exact size generation as well as approximate size
sampling. 

\subsection{Approximate-size Boltzmann Samplers for size-colored objects}
Now, we want to use profiled object samplers to generate colored objects. Suppose that we want to generate a colored object of size $n$ in a combinatorial class $\mathcal{A}$. First, we have to solve the following equation to find the parameter $x_0$ which targets correctly the samplers :
 $x.\dfrac{\frac{\partial}{\partial x} f_{\mathcal{A}}(x,n)}{f_{\mathcal{A}}(x,n)}=n$. We draw a profiled object with the parameters $s_i=n.x_0^i$. Now, if we color this object with $n$ colors, the sampler is uniform only  on the class $\mathcal{A}^{[n]}_{n}$ that is to say when we have drawn a profiled object of size $n$. We unbiais the approximate size generator by adding some rejections. The following lemma explains how we filter the sampler in order to correct it. 
\begin{lemma}\label{lemma}
 Let $A=\{x_1,...,x_k\}$ be a set of objects and $\Gamma$ a sampler for $A$ that draws the object $x_i$ with probability $p_i\neq 0$. Let $(p'_1,...,p'_k)$ another distibution of probability on $A$, and suppose that the $\dfrac{p'_i}{p_i}$ forms a decreasing sequence. Consider the following sampler $\Gamma'$ :\\

\begin{algorithm}[H]
\caption{$\Gamma'$}

\KwIn{the distribution of $p'_i$}

\KwOut{an object.}
Draw an object $x$ with $\Gamma$\\
Let $i$ such that $x=x_i$\\
Draw X following a Bernoulli law of parameter $\dfrac{p'_i.p_1}{p_i.p'_1}$\\
\eIf{$X=1$}{\Return $x$}{restart $\Gamma'$}

\end{algorithm}

Then $\Gamma'$ draws the object $x_i$ with probability $p'_i$. Moreover, $\Gamma'$ calls on average $\dfrac{p'_i}{p_i}$ times $\Gamma$ to obtain an object.
\end{lemma}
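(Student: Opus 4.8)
The plan is to analyze the rejection sampler $\Gamma'$ directly by computing the probability of each outcome, conditioning on the chain of restarts. First I would verify that the Bernoulli parameter $\frac{p'_i \, p_1}{p_i \, p'_1}$ is a valid probability, i.e.\ lies in $[0,1]$: since the ratios $\frac{p'_i}{p_i}$ are assumed decreasing, we have $\frac{p'_i}{p_i} \leq \frac{p'_1}{p_1}$ for every $i$, which rearranges exactly to $\frac{p'_i \, p_1}{p_i \, p'_1} \leq 1$; nonnegativity is immediate since all $p_i, p'_i \geq 0$ and $p_i \neq 0$. This monotonicity hypothesis is precisely what makes the acceptance probability well-defined, so it is worth flagging as the place where the decreasing assumption is used.

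Next I would compute the probability that a single pass of $\Gamma'$ (one call to $\Gamma$ followed by one Bernoulli trial) returns $x_i$ versus the probability that it restarts. In one pass, $x_i$ is drawn and accepted with probability
\[
p_i \cdot \frac{p'_i \, p_1}{p_i \, p'_1} = \frac{p_1}{p'_1} \, p'_i .
\]
Summing over $i$, the total acceptance probability in one pass is $\frac{p_1}{p'_1} \sum_i p'_i = \frac{p_1}{p'_1}$, so the restart probability is $r = 1 - \frac{p_1}{p'_1}$. The output of $\Gamma'$ is the result of the first accepting pass, so the overall probability of returning $x_i$ is a geometric sum over the number of preceding failed passes:
\[
\mathbb{P}(\Gamma' \text{ returns } x_i) = \sum_{m=0}^{\infty} r^m \cdot \frac{p_1}{p'_1} \, p'_i
= \frac{1}{1-r} \cdot \frac{p_1}{p'_1} \, p'_i = p'_i,
\]
which is the claimed distribution. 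The geometric decomposition works because the passes are independent and identically distributed, so conditioning on eventual acceptance just renormalizes the one-pass acceptance probabilities, each of which is already proportional to $p'_i$.

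For the expected number of calls to $\Gamma$, I would note that the total count is geometric: the number of passes has mean $\frac{1}{1-r} = \frac{p'_1}{p_1}$ calls \emph{unconditionally}. To get the conditional expectation given that the output is $x_i$ (which is what the statement asserts, $\frac{p'_i}{p_i}$ on average), I would argue that, conditioned on returning $x_i$, the number of failed passes before the accepting one and the accepting pass itself must be accounted for; one reexamines the joint event ``$m$ failures then accept $x_i$'' and computes $\mathbb{E}[\#\text{calls} \mid \text{output}=x_i]$ as $\sum_m (m+1) r^m \frac{p_1}{p'_1} p'_i \big/ p'_i$. The anticipated subtlety—and the main thing to get right—is disentangling this conditional expectation from the unconditional one, since a naive reading gives $\frac{p'_1}{p_1}$ rather than $\frac{p'_i}{p_i}$; I would reconcile this by carefully identifying exactly which calls the statement counts (likely the calls attributable to producing that specific output, where the final accepted draw of $x_i$ costs $\frac{1}{p_i}$-type weighting), making the manipulation of the geometric series the one genuinely delicate step.
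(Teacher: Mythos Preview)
For the output distribution your argument is identical to the paper's: check that $\dfrac{p'_i p_1}{p_i p'_1}\in[0,1]$ via the decreasing-ratio hypothesis, express the return probability as a geometric series over the number of rejected passes, and sum to $p'_i$.

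For the expected number of calls, your unconditional computation $\dfrac{1}{1-r}=\dfrac{p'_1}{p_1}$ is correct and your instinct that something is off is sound---but the conditional-expectation route you propose does not recover the stated $\dfrac{p'_i}{p_i}$. Carrying out the very sum you wrote,
\[
\mathbb{E}\bigl[\#\text{calls}\,\big|\,\text{output}=x_i\bigr]
=\frac{1}{p'_i}\sum_{m\ge 0}(m{+}1)\,r^{m}\,\frac{p_1}{p'_1}\,p'_i
=\frac{1}{(1-r)^{2}}\cdot\frac{p_1}{p'_1}
=\frac{p'_1}{p_1},
\]
independently of $i$, as it must be: the failed passes preceding acceptance are i.i.d.\ and independent of which $x_i$ is eventually accepted. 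No reinterpretation of ``calls attributable to $x_i$'' will yield $\dfrac{p'_i}{p_i}$. The paper itself simply asserts that $\sum_{n\ge0}(n{+}1)\bigl(\sum_j p_j\bar x_j\bigr)^{n}p_i x_i=\dfrac{p'_i}{p_i}$, but that sum actually evaluates to $\dfrac{p'_1 p'_i}{p_1}$, and dividing by $p'_i$ again gives $\dfrac{p'_1}{p_1}$. The ``$\dfrac{p'_i}{p_i}$'' in the statement is almost certainly a slip for $\dfrac{p'_1}{p_1}$ (which is also how the lemma is effectively applied afterwards, through the extremal ratio). So the delicate step you anticipate does not exist: your first computation is already the answer, and you should stop there rather than trying to manufacture the printed one.
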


\begin{proof}
First, observe that the parameter $\dfrac{p'_i.p_1}{p_i.p'_1}$ is in $[0,1]$.
 Put $x_i=\dfrac{p'_i.p_1}{p_i.p'_1}$, the probability to obtain $x_i$ is $\sum\limits_{n=0}^{\infty}{(\sum\limits_{j}{p_j\bar{x}_j})^n}p_ix_i=\dfrac{p_ix_i}{1-\sum\limits_{j}{p_j\bar{x}_j}}=\dfrac{p_ix_i}{\sum\limits_{j}{p_jx_j}}$. In this expression, substitute $x_i$ by $\dfrac{p'_i.p_1}{p_i.p'_1}$, we get $\dfrac{p_i\dfrac{p'_i.p_1}{p_i.p'_1}}{\sum\limits_{j}{p_j\dfrac{p'_j.p_1}{p_j.p'_1}}}=\dfrac{p'_i}{\sum\limits_{j}{p'_j}}=p'_i.$
The expected time is $\sum\limits_{n=0}^{\infty}{(n+1)(\sum\limits_{j}{p_j\bar{x}_j})^n}p_ix_i$. A similar calculation shows that this is equal to $\dfrac{p'_i}{p_i}$.
\end{proof}

 Now, consider the problem for sampling for colored objects of size around $n$, a sampler is described in Algorithm 5. In a first time, we propose to observe its running on an example  :\\

\begin{example} A generation for a  colored general tree of specification $T=\mathcal{Z}+\mathcal{Z}MSet(T)$ by Algorithm 5, with parameters $14$, $x_0$.

First, a profiled tree $A$ is drawn by Algorithm 2 with $s_i = 14.x_0^i$ :\\
We draw k=2.\\
For j=1, we draw q=1, consequently we draw an only profiled object using $\Gamma\Delta_{1}T_{(s_1,s_2,...)}$.
Then, we draw q=2, this involves the sampling of 2 profiled object by $\Gamma\Delta_{2}T_{(s_1,s_2,...)}$.
After that, we get the following profiled tree $A$ :

\begin{center}
\begin{tikzpicture}

\node (1) {$\square$} 
  child {  
    node (2) {$\square$} 
    edge from parent 
    node[fill=white,-|,inner sep=-0.8pt] {$\delta_1$}
}
child 
child {
   node (4_1) {$\square$}
    edge from parent
      node[fill=white,-|,inner sep=-0.8pt] {$\delta_2$}
  }
child 
child {
    node (5_1) {$\square$}    
   child {
    node (6_1) {$\square$}
    }
   child{
     node (8_11) {$\square$}
     edge from parent 
     node[fill=white,-|,inner sep=-0.9pt] {$\delta_3$}
   }
   edge from parent 
   node[fill=white,-|,inner sep=-0.8pt] {$\delta_2$}
}
;

\end{tikzpicture}
\end{center}
As the size of $A$ is $14=n$ (So lucky, we are), we just have to return a 14-coloration of $A$ respecting its symetries.
We find :\\
\begin{center}
\begin{tikzpicture}

\node (1) {8} 
  child {  
    node (2) {13} 
    edge from parent 
    node[fill=white,-|,inner sep=-0.8pt] {$\delta_1$}
}
child 
child {
   node (4_1) {13}
    edge from parent
      node[fill=white,-|,inner sep=-0.8pt] {$\delta_2$}
  }
child 
child {
    node (5_1) {3}    
   child {
    node (6_1) {8}
    }
   child{
     node (8_11) {4}
     edge from parent 
     node[fill=white,-|,inner sep=-0.9pt] {$\delta_3$}
   }
   edge from parent 
   node[fill=white,-|,inner sep=-0.8pt] {$\delta_2$}
}
;

\end{tikzpicture}
$\equiv$
\begin{tikzpicture}

\node (1) {8} 
  child {
    node (2) {{13}}    
  }
child 
child {
   node (4_1) {{13}}
  }
child 
child {
    node (4_2) {{13}}
  }
child 
child {
    node (5_1) {3}    
   child {
    node (6_1) {8}
    }
   child{
     node (8_11) {4}
   }
   child{
     node (8_12) {4}
   }
   child{
     node (8_13) {4}
   }
}
child 
child 
child 
child {
    node (5_2) {3}
    child {
      node (6_2) {8}
    }
    child{
      node (8_21) {4}
    }
    child{
      node (8_22) {4}
    }
    child{
      node (8_23) {4}
    }
  };
\end{tikzpicture}
\end{center}

\end{example}
 \begin{theorem}
  The following algorithm is a approximate-size Boltzmann sampler for the size-colored objects in $\mathcal{A}$. Its overall cost is $$O\left(n.((1-\epsilon).\dfrac{f_{\mathcal{A}}(x,n)}{f_{\mathcal{A}}(x,(1-\epsilon)n)}+ (1+\epsilon)^{(1+\epsilon)n}.\dfrac{f_{\mathcal{A}}(x,n)}{f_{\mathcal{A}}(x,(1+\epsilon)n)}\right)$$
 \end{theorem}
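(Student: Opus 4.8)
The plan is to derive the Theorem from Lemma~\ref{lemma} by setting up the two relevant distributions on size-colored objects and then reading the cost off the ``Moreover'' clause of that Lemma. At the level of detail needed, Algorithm~5 proceeds as follows: after solving $x\cdot\partial_x f_{\mathcal{A}}(x,n)/f_{\mathcal{A}}(x,n)=n$ for $x_0$, it repeatedly draws a profiled object $\langle a,\sigma\rangle$ with the profiled samplers of Section~3 using the parameters $s_i=n\,x_0^i$, discards it if its size $m=|a|$ leaves the window $[(1-\epsilon)n,(1+\epsilon)n]$, applies the Bernoulli filter of Lemma~\ref{lemma}, and finally colors each part of $\sigma$ uniformly in $\{1,\dots,m\}$ so as to output a genuine size-colored object. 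I take as target distribution $p'$ the size-colored Boltzmann law $p'(b)=x_0^{|b|}/C$ restricted to the window (with $C$ the corresponding partition function), and as raw distribution $p$ the law of the uncorrected ``draw-then-$m$-color'' procedure; the whole statement then reduces to checking the hypotheses of Lemma~\ref{lemma} for this pair and tracking the resulting constant.

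The correctness step is to compute $p(b)$ explicitly. Running the coloring argument used for the $t$-colored samplers, but with the number $m$ of output colors different from the number $n$ appearing in $s_i=n\,x_0^i$, the factor $t^{\#\text{parts}}$ no longer cancels, and for an object $b$ of size $m$ one obtains
\[
p(b)=\frac{x_0^{m}}{f_{\mathcal{A}}(x_0,n)}\sum_{\sigma\ b\text{-admissible}}\Bigl(\tfrac{n}{m}\Bigr)^{\nu(\sigma)}\kappa_\sigma\,[\mathbf{s}^{\mathbf{n}}]F_{\mathcal{A}},
\]
where $\nu(\sigma)$ is the number of parts of $\sigma$ and $\kappa_\sigma$ is as in the $t$-colored proposition, so that the bracketed weights sum to $1$ exactly when $m=n$. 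Writing $S(b)=\sum_\sigma(n/m)^{\nu(\sigma)}\kappa_\sigma[\mathbf{s}^{\mathbf{n}}]F_{\mathcal{A}}$, the ratio $p'(b)/p(b)$ equals $\bigl(f_{\mathcal{A}}(x_0,n)/C\bigr)/S(b)$. I would then order the objects by decreasing $p'/p$ and take the maximizer as the reference index, so that every Bernoulli parameter of Lemma~\ref{lemma} lies in $[0,1]$; the Lemma then immediately yields that the filtered sampler outputs each size-colored $b$ with probability $p'(b)$, i.e.\ a valid approximate-size Boltzmann sampler.

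For the cost, Lemma~\ref{lemma} says the expected number of calls to the raw sampler is the maximal ratio $\max_b p'(b)/p(b)$, while each call costs $O(n)$ once generation is aborted past size $(1+\epsilon)n$. The two terms of the bound come from the two ends of the window through $S(b)$: for $m\le n$ one has $S(b)\ge1$, so those draws are easy and their contribution carries only the lower-tail normalization ratio $f_{\mathcal{A}}(x,n)/f_{\mathcal{A}}(x,(1-\epsilon)n)$ weighted by the size factor $(1-\epsilon)$; for $m>n$ the minimum of $S(b)$ is reached by a fully asymmetric object, where only the all-singletons profile survives so that $S(b)=(n/m)^{m}$, producing the dominant blow-up $(m/n)^m\le(1+\epsilon)^{(1+\epsilon)n}$ together with the upper-tail ratio $f_{\mathcal{A}}(x,n)/f_{\mathcal{A}}(x,(1+\epsilon)n)$. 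Combining these two contributions and multiplying by the $O(n)$ per-call cost gives the stated complexity.

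The main obstacle I expect is precisely this last bounding: verifying the decreasing-ratio hypothesis of Lemma~\ref{lemma} uniformly over all objects of every size in the window, and pinning $\max_b p'(b)/p(b)$ down sharply enough to split it into the two generating-function ratios. Isolating the fully asymmetric object of size $(1+\epsilon)n$ as the worst case, and hence the factor $(1+\epsilon)^{(1+\epsilon)n}$, is the crux, since it is the $m$-coloration of a symmetry-free object (with $m>n$) that the $n$-parameterized profiled sampler under-produces most severely; everything else is routine constant-tracking over the Boltzmann tails.
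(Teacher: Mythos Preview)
Your setup applies Lemma~\ref{lemma} at the wrong level, and because of that it does not analyze Algorithm~5 as written. In Algorithm~5 the Bernoulli filter is applied to the \emph{profiled object} $\langle a,\sigma\rangle$, before any coloring, and its parameter $(\tilde n/n)^{-\alpha+\sum\tilde n_i}$ depends only on the profile type $(\tilde n_1,\tilde n_2,\dots)$ that was actually drawn. You instead take the underlying set of Lemma~\ref{lemma} to be the set of size-colored objects $b$, with raw law $p(b)=\dfrac{x_0^{m}}{f_{\mathcal A}(x_0,n)}\,S(b)$ where $S(b)$ is a sum over \emph{all} $b$-admissible profiles. The Bernoulli parameter dictated by that setup would have to depend on $S(b)$, hence on the output coloration; this is neither what the algorithm computes nor computable before coloring. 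So your argument, if completed, would establish the correctness of some other rejection scheme, not of Algorithm~5.

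The paper's route is the one your description of the algorithm already suggests: apply Lemma~\ref{lemma} to profile types. The raw law is $\mathbb P_n(\mathbf{\tilde n})=\dfrac{n^{\sum\tilde n_i}x^{\tilde n}[\mathbf s^{\mathbf{\tilde n}}]F_{\mathcal A}}{f_{\mathcal A}(x,n)}$ and the target is $\mathbb P_{\tilde n}(\mathbf{\tilde n})=\dfrac{\tilde n^{\sum\tilde n_i}x^{\tilde n}[\mathbf s^{\mathbf{\tilde n}}]F_{\mathcal A}}{f_{\mathcal A}(x,\tilde n)}$, so the ratio is $(\tilde n/n)^{\sum\tilde n_i}$ up to a constant. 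The monotonicity hypothesis of the Lemma is then just monotonicity in $\sum\tilde n_i$, the extremal index is $\alpha$ (resp.\ $\tilde n$) when $\tilde n<n$ (resp.\ $\tilde n>n$), and the Bernoulli parameter $(\tilde n/n)^{-\alpha+\sum\tilde n_i}$ falls out immediately. After this filter the profile is distributed as under $s_i=\tilde n\,x_0^i$, so the final $\tilde n$-coloration is correct by Proposition~4.1; no sum over admissible profiles is needed. The cost bound then comes directly from the ``Moreover'' clause evaluated at these extremal profile types, which gives the factors $(1-\epsilon)$ and $(1+\epsilon)^{(1+\epsilon)n}$ without the detour through bounding $S(b)$ over fully asymmetric objects.
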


\begin{algorithm}[H]
\caption{$\Gamma_{colored}$}

\KwIn{the parameters $n,x_0$}

\KwOut{a colored object.}
Draw a profiled object $a$ with the parameters $s_i=n.x_0^i$\\
Let $\tilde{n}\leftarrow$ size of $a$\\
Let $\tilde{n}_i$ the number of parts of the profile of cardinal $i$.\\
\eIf{$\tilde{n}\notin [(1-\epsilon)n,(1+\epsilon)n]$}{restart $\Gamma_{colored}$}{
\Switch{the value $\tilde{n}$}{
\Case{$\tilde{n}=n$}{
\Return a $n$-coloration of $a$.\\
}
\Case{$\tilde{n}<n$}{
Let $\alpha$ be the minimum in $\{\sum n_i; \sum i n_i=\tilde{n}\mbox{ and }[s_1^{n_1}s_2^{n_2}...]F_{\mathcal{A}}(s_1,s_2,...)\neq 0\}$\\
Draw X following a Bernoulli of parameter $(\dfrac{\tilde{n}}{n})^{-\alpha+\sum \tilde{n}_i} $\\
\eIf{$X=1$}{\Return a $\tilde{n}$-coloration of $a$}{restart $\Gamma_{colored}$}
}
\Case{$\tilde{n}>n$}{
Draw X following a Bernoulli of parameter $(\dfrac{\tilde{n}}{n})^{-\tilde{n}+\sum \tilde{n}_i} $\\
\eIf{$X=1$}{\Return a $\tilde{n}$-coloration of $a$}{restart $\Gamma_{colored}$}
}
}
}
\end{algorithm}

\begin{proof}[Proof of the correctness of $\Gamma_{colored}$]
 Using our sampler of profiled objects with the parameters $s_i=n.x_0^i$, the probability to draw a profiled object of size $\tilde{n}$ having a profile with $\tilde{n}_i$ parts of cardinal $i$ is $\mathbb{P}_n(\mathbf{\tilde{n}})=\dfrac{n^{\sum \tilde{n}_i}.x^{\tilde{n}}.[\mathbf{s}^\mathbf{\tilde{n}}]F_{\mathcal{A}}}{f_{\mathcal{A}}(x,n)}$. We have to add after that a filter in such a way that the probability becomes $\mathbb{P}_{\tilde{n}}(\mathbf{\tilde{n}})=\dfrac{\tilde{n}^{\sum \tilde{n}_i}.x^{\tilde{n}}.[\mathbf{s}^\mathbf{\tilde{n}}]F_{\mathcal{A}}}{f_{\mathcal{A}}(x,\tilde{n})}$. So, we only have to know for which profile the ratio $\dfrac{\mathbb{P}_{\tilde{n}}(\mathbf{\tilde{n}})}{\mathbb{P}_{n}(\mathbf{\tilde{n}})}$ is maximum. If $\tilde{n}<n$ (resp.  $\tilde{n}>n$), this happens when $\sum n_i$ is minimum belongs the $(n_1,n_2,...)$ such that $\sum i n_i=\tilde{n}$ and $[s_1^{n_1}s_2^{n_2}...]F_{\mathcal{A}}(s_1,s_2,...)\neq 0$  (resp. when $\sum n_i$ is maximum belongs the $(n_1,n_2,...)$ such that $\sum i n_i=\tilde{n}$ and $[s_1^{n_1}s_2^{n_2}...]F_{\mathcal{A}}(s_1,s_2,...)\neq 0$). Notice that the maximum is always $\tilde{n}$. The choice of the Bernoulli parameters follows directly from the lemma \ref{lemma}.\\
Under classical conditions explain in \cite{DFLS}, the sampling in approximate-size of a $k$-colored constructible structure can be done in linear time. Now, it suffices to evaluate the cost of the added rejection phase. The average number of loops is $(1-\epsilon).\dfrac{f_{\mathcal{A}}(x,n)}{f_{\mathcal{A}}(x,(1-\epsilon)n)}$ when $\tilde{n}<n$ and $(1+\epsilon)^{(1+\epsilon)n}.\dfrac{f_{\mathcal{A}}(x,n)}{f_{\mathcal{A}}(x,(1+\epsilon)n)}$ otherwise.
\end{proof}

\begin{remark}
 At this stage, the explicit evaluation of the ratio $\dfrac{f_{\mathcal{A}}(x,n)}{f_{\mathcal{A}}(x,(1+\epsilon)n)}$ is out of reach, but it experimentaly seems that it is often better to choose $\tilde{n}>n$. In this case, it is more advisable to modify the line 4 in the previous algorithm by\\ (4') if $\tilde{n}\notin [n,(1+\epsilon)n]$ then restart $\Gamma_{colored}$.
\end{remark}



\section{Conclusion}
We have proved that the profiled objects are powerful tools to generate under Boltzmann model $k$-colored objects and to obtain a generator in approximate size for colored objects. 
Surprisingly, we only have to know the minimum in $\{\sum n_i; \sum i n_i=\tilde{n}\mbox{ and }[s_1^{n_1}s_2^{n_2}...]F_{\mathcal{A}}(s_1,s_2,...)\neq 0\}$ to unbiais the generation of colored objects. We also expect to develop our ideas to even more general coloration types as for instance to impose constraints on the number of atoms with the same colors. Finally, we would like to thank E. Fusy, C. Pivoteau and M. Soria for their precious comments and corrections.

\end{document}